\newcommand{\ket}[1]{| #1 \rangle}
\def\identity{\leavevmode\hbox{\small1\kern-3.8pt\normalsize1}}
\newtheorem{lemma}{Lemma}
\begin{document}

\title{Cooperation and dependencies in multipartite systems}

\author{Waldemar K{\l}obus}
\affiliation{Institute of Theoretical Physics and Astrophysics, Faculty of Mathematics, Physics and Informatics, University of Gda\'nsk, 80-308 Gda\'nsk, Poland}

\author{Marek Miller}
\affiliation{School of Physical and Mathematical Sciences, Nanyang Technological University, 637371 Singapore}

\author{Mahasweta Pandit}
\affiliation{Institute of Theoretical Physics and Astrophysics, Faculty of Mathematics, Physics and Informatics, University of Gda\'nsk, 80-308 Gda\'nsk, Poland}

\author{Ray Ganardi}
\affiliation{Institute of Theoretical Physics and Astrophysics, Faculty of Mathematics, Physics and Informatics, University of Gda\'nsk, 80-308 Gda\'nsk, Poland}
\affiliation{International Centre for Theory of Quantum Technologies, University of Gda\'nsk, 80-308 Gda\'nsk, Poland}

\author{Lukas Knips}
\affiliation{Max-Planck-Institut f\"{u}r Quantenoptik, Hans-Kopfermann-Stra{\ss}e 1, 85748 Garching, Germany}
\affiliation{Department f\"{u}r Physik, Ludwig-Maximilians-Universit\"{a}t, Schellingstra{\ss}e 4, 80799 M\"{u}nchen, Germany}
\affiliation{Munich Center for Quantum Science and Technology (MCQST), Schellingstra{\ss}e 4, 80799 M\"{u}nchen, Germany}

\author{Jan Dziewior}
\affiliation{Max-Planck-Institut f\"{u}r Quantenoptik, Hans-Kopfermann-Stra{\ss}e 1, 85748 Garching, Germany}
\affiliation{Department f\"{u}r Physik, Ludwig-Maximilians-Universit\"{a}t, Schellingstra{\ss}e 4, 80799 M\"{u}nchen, Germany}
\affiliation{Munich Center for Quantum Science and Technology (MCQST), Schellingstra{\ss}e 4, 80799 M\"{u}nchen, Germany}

\author{Jasmin Meinecke}
\affiliation{Max-Planck-Institut f\"{u}r Quantenoptik, Hans-Kopfermann-Stra{\ss}e 1, 85748 Garching, Germany}
\affiliation{Department f\"{u}r Physik, Ludwig-Maximilians-Universit\"{a}t, Schellingstra{\ss}e 4, 80799 M\"{u}nchen, Germany}
\affiliation{Munich Center for Quantum Science and Technology (MCQST), Schellingstra{\ss}e 4, 80799 M\"{u}nchen, Germany}

\author{Harald Weinfurter}
\affiliation{Max-Planck-Institut f\"{u}r Quantenoptik, Hans-Kopfermann-Stra{\ss}e 1, 85748 Garching, Germany}
\affiliation{Department f\"{u}r Physik, Ludwig-Maximilians-Universit\"{a}t, Schellingstra{\ss}e 4, 80799 M\"{u}nchen, Germany}
\affiliation{Munich Center for Quantum Science and Technology (MCQST), Schellingstra{\ss}e 4, 80799 M\"{u}nchen, Germany}

\author{Wies{\l}aw Laskowski}
\affiliation{Institute of Theoretical Physics and Astrophysics, Faculty of Mathematics, Physics and Informatics, University of Gda\'nsk, 80-308 Gda\'nsk, Poland}
\affiliation{International Centre for Theory of Quantum Technologies, University of Gda\'nsk, 80-308 Gda\'nsk, Poland}

\author{Tomasz Paterek}
\affiliation{Institute of Theoretical Physics and Astrophysics, Faculty of Mathematics, Physics and Informatics, University of Gda\'nsk, 80-308 Gda\'nsk, Poland}
\affiliation{School of Physical and Mathematical Sciences, Nanyang Technological University, 637371 Singapore}
\affiliation{MajuLab, International Joint Research Unit UMI 3654, CNRS, Universit\'e C\^ote d'Azur, Sorbonne Universit\'e, National University of Singapore, Nanyang Technological University, Singapore}

\begin{abstract}
We propose an information-theoretic quantifier for the advantage gained from cooperation that captures the degree of dependency between subsystems of a global system.
The quantifier is distinct from measures of multipartite correlations despite sharing many properties with them.
It is directly computable for classical as well as quantum systems and reduces to comparing the respective conditional mutual information between any two subsystems.
{\color{black} Exemplarily we show the benefits of using the new quantifier for symmetric quantum secret sharing.}
We also prove an inequality characterizing the lack of monotonicity of conditional mutual information under local operations and provide intuitive understanding for it.
{\color{black} This underlines the distinction between the multipartite dependence measure introduced here and multipartite correlations.}
\end{abstract}

\maketitle

\section{Introduction}

Identifying and quantifying dependencies in multipartite systems enables their analysis and provides a better understanding of complex phenomena.
The problem has been addressed by several communities, considering both classical and quantum systems.
For example, in neuroscience and genetics measures of multipartite synergy were put forward~\cite{syn1,syn2,syn3,syn4,synergia1,synergia2},
in quantitative sociology quantifiers of coordination were introduced~\cite{socio},
{\color{black} redundancy was quantified in complex systems~\cite{WB2010},}
and in physics and information processing quantities aimed at characterizing genuine multiparty correlations were studied in depth~\cite{Zhou,Zhou2008,Kaszlikowski2008,Grudka,Giorgi,Girolami2017}.
The former quantifiers are motivated mathematically, keeping the combinatorial aspects of complex systems in mind, e.g., the synergy is the difference in the information all subsystems have about an extra system as compared to the total information contained in any subset of the systems.
Many of the latter quantifiers involve difficult optimizations and are therefore hard to compute,
{\color{black} e.g. in order to compute the extractable work used to define genuinely multipartite correlations in \cite{Grudka} one has to optimise over all protocols with local unitary operations, local dephasings and classical communication, which does not admit any computer-friendly parameterisation.}
Here, we introduce an operationally defined, simple and computable quantifier of multipartite dependency in terms of information gain from cooperation when some parties meet and try to deduce the variables of some of the remaining parties.
We show how it differs from multipartite correlations {\color{black} often discussed in this context}, prove its essential properties and discuss {\color{black} examples and applications}.

It turns out that, in order to compute the quantity introduced here, it is sufficient to consider the respective conditional mutual information between only two subsystems.
{\color{black} The conditional mutual information is a well established quantity in quantum information theory and broader physics.
It captures the communication cost of quantum state redistribution~\cite{Devetak2003,Devetak2008,Devetak2009,Brandao2015}, topological entanglement entropy~\cite{Kitaev2006,Levin2006,Kim2012,Zeng2018} and squashed entanglement~\cite{Christandl2004}.
It is related to quantum discord~\cite{Piani2012,Byrnes2020}, intrinsic steerability~\cite{Kaur2017}, conditional erasure cost of a tripartite state~\cite{Berta2018,Berta20182} and the conditional quantum one-time pad~\cite{Sharma2020}.
Additionally, conditional quantum mutual information appears in thermodynamics~\cite{Mahajan2016}, in relation to high-energy physics~\cite{Czech2015,Ding2016,Pastawski2017} and Markov chains~\cite{Petz1986,Ruskai2002,Petz2003,Ibinson2007,Fawzi2015,Sutter2016}.
Therefore, the dependence introduced here is a relevant quantity in all these problems if we symmetrize them and look for the worst case scenario.
Concrete examples are given below.
Furthermore, all these different studies will benefit from the inequality we prove here, which} characterizes the lack of monotonicity of quantum conditional mutual information under general local operations.

%%%%%%%%%%%%%%%%%%%%%%%%%%%%%%%%%%%%%%%%%%%%%%%%%%%%%%%%%%%%%%%%%%%%%%%%%%%%%%%%%%
%%%%%%%%%%%%%%%%%%%%%%%%%%%%%%%%%%%%%%%%%%%%%%%%%%%%%%%%%%%%%%%%%%%%%%%%%%%%%%%%%%

\section{Multipartite dependence}

Let us begin by briefly recalling fundamental relationships, e.g., that two classical variables $X_1$ and $X_2$ are statistically independent if their probabilities satisfy $P(X_1 | X_2) = P(X_1)$. 
Alternatively, the statistical independence can be stated in terms of entropies with the help of both the Shannon entropy {\color{black} \cite{NielsenChuang}}
 %of a distribution of a variable $X$, defined as 
 $H(X) = -\sum^d_{i=1} P(x_i)\log_d P(x_i)$, where $d$ is the number of outcomes, and the conditional entropy $H(X|Y) = -\sum_{i,j}P(x_i,y_j) \log_d \frac{P(x_i,y_j)}{P(y_j)}$. {\color{black}Note that throughout this paper logarithms are base $d$.}  As a measure of dependence of two variables $X_1$ and $X_2$ one introduces the corresponding entropic difference $H(X_1) - H(X_1 | X_2)$, the so-called mutual information $I(X_1 : X_2)$~\cite{CoverThomas}.
 Similarly, the quantum mutual information captures the dependence between quantum subsystems~\cite{Modi2010}, characterized by the state $\rho$, when we use the von Neumann entropy $S(\rho) = - \textrm{Tr}{(\rho \log_d{\rho})}$ in place of the Shannon entropy.
However, already in the case of three variables there are two levels of independence.
The variable $X_1$ can be independent of all other variables, i.e., $P(X_1 | X_2 X_3) = P(X_1)$, or it can be conditionally independent of one of them, e.g., $P(X_1 | X_2 X_3) = P(X_1 | X_2)$.
The former dependence is again captured by the mutual information $I(X_1 : X_2 X_3)$, while the so-called conditional mutual information $I(X_1 : X_3 | X_2)=H(X_1|X_2) - H(X_1 | X_2 X_3)$ considers the latter.
It is thus natural to define the \emph{tripartite dependence} as the situation where any variable depends on all the other variables.
This can be quantified as the worst case conditional mutual information
\begin{eqnarray}
\mathcal{D}_3 & \equiv & \min [ I(X_1 : X_2 | X_3), I(X_1 : X_3 | X_2), \nonumber \\
& & \qquad \quad I(X_2 : X_3 | X_1)  ].
\label{EQ_D3}
\end{eqnarray}
Due to strong subadditivity the conditional mutual information is non-negative and hence $\mathcal{D}_3 \ge 0$~\cite{NielsenChuang}.
$\mathcal{D}_3$ vanishes if and only if there exists a variable such that already a subset of the remaining parties
{\color{black} contains all available information about it.}
Note that this condition is also satisfied if a variable is not correlated with the rest of the system at all.

The value of $\mathcal{D}_3$ can be interpreted using an alternative expression for conditional mutual information, e.g., $I(X_1 : X_3 | X_2) = I(X_1 : X_2 X_3) - I(X_1 : X_2)$.
Accordingly, one recognizes from Eq.~(\ref{EQ_D3}) that $\mathcal{D}_3$ expresses the gain in information about the first subsystem that the second party has from \emph{cooperating} with the third party.
Non-zero value of $\mathcal{D}_3$ ensures that any two parties always gain through cooperation when accessing the knowledge about the remaining subsystem.
The minimal gain over the choice of parties is an alternative way to compute $\mathcal{D}_3$.
{\color{black} Of course by taking this minimum some information about the underlying quantum state is inevitably lost, but our quantity is designed to be sensitive to genuinely multipartite properties of the state only.}

In the context of quantum subsystems we can rewrite the  conditional mutual information as $I(X_1 : X_3 | X_2) = S(X_1 | X_2) + S(X_3 | X_2) - S(X_1 X_3 | X_2)$. %, where
%e.g. $S(X_1 | X_2)$ is the conditional entropy based on 
%the von Neumann entropy $S(\cdot)$.
Since $S(X_1|X_2)$ is the entanglement
cost of merging a state $X_1$ with $X_2$, see Ref.~\cite{Horodecki2005}, we can interpret the conditional mutual information as the extra cost
of merging states one by one ($X_1$ with $X_2$ and $X_3$ with
$X_2$) instead of altogether ($X_1 X_3$ with $X_2$). $\mathcal{D}_3$ is the
minimum extra cost of this merging.

As another example, quantum conditional mutual information prominently appears in the definition of Markov chains.
For a tripartite system in state $\rho_{123}$, vanishing $I(X_1:X_3|X_2) = 0$ means that there exists a recovery map $\mathcal{R}_{X_2 \to X_2 X_3}$, such that $\rho_{123} = \mathcal{R}_{X_2 \to X_2 X_3}(\rho_{12})$, i.e. the global state can be recovered by applying map from $X_2$ to $X_2 X_3$ on marginal $X_1 X_2$. The global state is then called a quantum Markov chain.
More generally, it turns out that the state $\mathcal{R}_{X_2 \to X_2 X_3}(\rho_{12})$ has a high fidelity with $\rho_{123}$ for all states with small $I(X_1:X_3|X_2)$~\cite{Fawzi2015}.
A small values of $\mathcal{D}_3$ therefore indicates that there exists a subsystem from which the global state can be recovered, approximating a quantum Markov chain, and a large $\mathcal{D}_3$ shows that there is no such subsystem.

Moving on to larger systems, we note that there are more conditions to be considered already in order to define the four-partite dependence.
In analogy to the tripartite case the first condition is to require that cooperation of any triple of parties provides more information about the remaining subsystem, e.g., $I(X_1 : X_2 X_3 X_4) - I(X_1 : X_2 X_3)$ must be positive.
But one should also impose that cooperation between any pair brings information gain about the two remaining variables, e.g., $I(X_1 X_2 : X_3 X_4) - I(X_1 X_2 : X_3)$ must be positive.
The former condition demands a positive conditional mutual information, $I(X_1 : X_4 | X_2 X_3) > 0$, while the latter one requires $I( X_1 X_2 : X_4 | X_3)> 0$.
In order to compute $\mathcal{D}_4$ one takes the minimum of these two conditional mutual informations over all permutations of subsystems.
Note, however, that {\color{black} from the chain rule for mutual information and its non-negativity we have, e.g., $I( X_1 X_2 : X_4 | X_3) = I(X_2 : X_4|X_3) + I(X_1 : X_4 | X_2 X_3) \ge I(X_1 : X_4 | X_2 X_3)$ }and therefore it is sufficient to minimize over the conditional mutual information between two variables only.
We emphasize that this step simplifies the computation significantly. 
The same argument applies for arbitrary $N$ and leads to the  definition of $N$-partite dependence
\begin{eqnarray}
\mathcal{D}_N & \equiv & \min_{\textrm{perm}} I(X_1 : X_2 | X_3 \dots X_N),
\label{EQ_DN}
\end{eqnarray}
where the minimum is taken over all permutations of the subsystems.
In the case of a quantum system in state $\rho$ we obtain
\begin{equation}
\mathcal{D}_N (\rho)  =  \min_{i,j} [ S(\mathrm{Tr}_i \rho) + S(\mathrm{Tr}_j \rho)  - S(\mathrm{Tr}_{ij} \rho) - S(\rho)], 
\label{EQ_DN_RHO}
\end{equation}
{\color{black} where $i, j = 1 \dots N$ and $i \ne j$.
$\mathrm{Tr}_i \rho$ denotes a partial trace over the subsystem $i$.}
In general, calculating the $N$-partite dependence requires computation and comparison of ${N \choose 2}$ values, i.e., scales polynomially as $N^2$, whereas for permutationally invariant systems it is straightforward.

One may also like to study $k$-partite dependencies within an $N$-partite system.
To this aim we propose to apply the definitions above to any $k$-partite subsystem and take the minimum over the resulting values.

%%%%%%%%%%%%%%%%%%%%%%%%%%%%%%%%%%%%%%%%%%%%%%%%%%%%%%%%%%%%%%%%%%%%%%%%%%%%%%%%%%
%%%%%%%%%%%%%%%%%%%%%%%%%%%%%%%%%%%%%%%%%%%%%%%%%%%%%%%%%%%%%%%%%%%%%%%%%%%%%%%%%%

\section{Properties}

{\color{black} We now prove essential properties of the introduced dependence measure and explain why it is distinct from the multipartite correlations.}

%%%%%%%%%%%%%%%%%%%%%%%%%%%%%%%%%%%%%%%%%%%%%%%%%%%%%%%%%%%%%%%%%%%%%%%%%%%%%%%%%%

\subsection{Pure states}

First of all, for pure quantum states $| \Psi \rangle$, the dependence can be simplified as
\begin{eqnarray}
\mathcal{D}_N(| \Psi \rangle)  &=& \min_{i,j} [ S(\textrm{Tr}_i | \Psi \rangle \langle \Psi|) \label{pure}\\
&+&S(\textrm{Tr}_j | \Psi \rangle \langle \Psi|)-S(\textrm{Tr}_{ij} | \Psi \rangle \langle \Psi|) ] \nonumber \\
&=& \min_{i,j} [ S(\rho_i)+S(\rho_j)-S(\rho_{ij}) ], \nonumber 
% &=& \min_{i,j} I(\rho_i:\rho_j), \nonumber
\end{eqnarray}
where $\rho_i$ is the state of the system after removing all but the $i$-th particle, i.e., $\mathcal{D}_N(| \Psi \rangle)$ is given by the smallest quantum mutual information in two-partite subsystems, {\color{black} without any conditioning.}
Here, we made use of the fact that both subsystems of a pure state have the same entropy: $S(\textrm{Tr}_i \rho)=S(\rho_i)$ for $\rho=| \Psi \rangle \langle \Psi|$.
In Appendix~\ref{ogr2} we prove the following upper bound on $\mathcal{D}_N$ for pure states
\begin{equation}
\mathcal{D}_N(| \Psi \rangle) \le 1.
\end{equation}
It is a consequence of the trade-off relation between the quantum mutual information for different two-particle subsystems of a pure global state and the definition of $\mathcal{D}_N$ where the smallest conditional mutual information is chosen.
In particular, the bound is achieved by the $N$-qu$d$it GHZ state $\frac{1}{\sqrt{d}}(\ket{0 \dots 0} + \dots + \ket{d-1 \dots d-1})$.
{\color{black} This can be seen from Eq.~(\ref{pure}).
The two-particle subsystems of the GHZ state are of the form $\rho_{12} = \sum_{j = 0}^{d-1} \frac{1}{d} | j j \rangle \langle jj |$ and their mutual information equals $1$.}
Additionally, the quantum mutual information is bounded by $1$ whenever the state $\rho_{ij}$ is separable.
{\color{black} This follows from the fact that separable states have non-negative quantum conditional entropy $S_{i|j}(\rho_{ij}) \ge 0$~\cite{CerfAdami1999}, and accordingly their mutual information is bounded as $I_{i:j}(\rho_{ij}) = S(\rho_i) - S_{i|j}(\rho_{ij}) \le S(\rho_i) \le 1$, where the last inequality is the consequence of using logarithms to base $d$.}
A comprehensive list of dependencies within standard classes of quantum states is given in Tab. \ref{TAB_TH_STATES}. 
The analytical formula for the $N$-qubit Dicke states with $e$ excitations, $| D_N^e \rangle$, is presented in Appendix \ref{formDicke}. 
In short, if one fixes $e$ and takes the limit $N \to \infty$, the dependence $\mathcal{D}_N$ vanishes. 
For $e$ being a function of $N$, e.g., $e = N/2$, the dependence $\mathcal{D}_N$ tends to $1/2$.

\begin{table}
	\begin{tabular}{c c c c c c}\hline\hline
		$N$ & state & $\mathcal{D}_3$ &$\mathcal{D}_4$ &$\mathcal{D}_5$ &$\mathcal{D}_6$  \\
		\hline 
		3 & $\{P_{\textrm{same}} \}$ & 0  & - & - & - \\
		3 & $\{P_{\textrm{even}} \}$ & 1  & - & - & - \\
		3 & GHZ & 1  & - & - & - \\
	 	3 &$D_{3}^{1}$& 0.9183 & - & - & - \\ 
		3 & $D_{3}^{2}$& 0.9183 & - & - & - \\ 
		3 &$\rho_{\textrm{nc},3}$ & 0.5033 & - & - & -   \\ \hline
		4 & GHZ & 0  & 1 & - & - \\
		4 &$D_{4}^{1}$& 0.3774 & 0.62256 & - & -  \\ 
		4 &$D_{4}^{2}$& 0.5033 & 0.7484 &  - & - \\ 
		4 & $\Psi_4$ & 0.4150 & 0.4150 & - & -  \\
		4 &$L_{4}$& 1 & 0 & - & - \\ 
		4 &3-uniform & 0 & 2 & - & -  \\\hline
		5 & GHZ & 0  & 0 & 1 & - \\
		5 &$D_{5}^{1}$& 0.2490 & 0.2490 & 0.4729 & - \\ 
		5 &$D_{5}^{2}$& 0.3245 & 0.3245 & 0.6464 & - \\ 
		5 &$D_{5}^{3}$ & 0.3245 & 0.3245 & 0.6464 & -   \\ 
		5 & $\rho_{\textrm{nc},5}$ & 0.1710 & 0.6490 & 0.4729 & -   \\ 
		5 &$L_{5}$& 0 & 0 & 0 & - \\ 
		5 &$R_{5}$& 1 & 1 & 0 & - \\ 
		5 &AME(5,2) & 1 & 1 & 0 & -  \\ \hline
		6 & GHZ & 0  & 0 & 0 & 1 \\
		6 &$D_{6}^{1}$& 0.1866 & 0.1634 & 0.1866 & 0.3818 \\ 
		6 &$D_{6}^{2}$& 0.2566 & 0.1961 & 0.2566 & 0.5637 \\ 
		6 &$D_{6}^{3}$& 0.2729 & 0.1961 & 0.2729 & 0.6291  \\ 
		6 &$L_{6}$& 0 & 0 & 0 & 0 \\ 
		6 &$R_{6}$& 0 & 0 & 0 & 0 \\ 
		6 &AME(6,2) & 0 & 2 & 0 & 0 \\ 
		6 &5-uniform & 0 & 0 & 0 & 2 \\ \hline\hline
	\end{tabular}
	\caption{\label{TAB_TH_STATES} Values of the dependence for several quantum states and probability distributions.
		$\{P_{\textrm{same}} \}$ stands for $P(000) = P(111) = \frac{1}{2}$ and $\{P_{\textrm{even}} \}$ for $P(000) = P(110) = P(101) = P(011) = \frac{1}{4}$.
		$D^k_N$ denotes the $N$-partite Dicke states with $k$ excitations $\sim \ket{1\dots 1 0 \dots 0} + \dots + \ket{0 \dots 0 1 \dots 1}$, with $k$ ones;
		$\rho_{\textrm{nc}}$ denotes the genuinely multipartite entangled state without multipartite correlations~\cite{Kaszlikowski2008};
		the GHZ state is described in the text;
		{\color{black} $L_{k}$ and $R_{k}$ stands for the linear cluster and the ring cluster states of $k$ qubits (in general, the graph states are defined by the elements of the stabilizer group for a particular linear or ring graphs, as shown in~\cite{PhysRevA.82.012337})} and $\Psi_4$ is discussed in~\cite{psi4}.
		$k$-uniform states are states where all $k$-partite marginals are maximally mixed, whereas AME(n,d), so-called absolutely maximally entangled states, refer to $\lfloor n/2 \rfloor$-uniform states of $d$ dimensions~\cite{Helwig2012}.
	}
\end{table}

%%% (in general, the graph state $|G \rangle$ associated to a graph is the unique $k$-qubit state fulfilling $\hat{K}_i |G \rangle = |G \rangle$ for all operators $\hat{K}_i$ which define the given graph, as shown in~\cite{PhysRevA.82.012337})

%%%%%%%%%%%%%%%%%%%%%%%%%%%%%%%%%%%%%%%%%%%%%%%%%%%%%%%%%%%%%%%%%%%%%%%%%%%%%%%%%%

\subsection{Maximum dependence}

The maximal $N$-partite dependence over classical distributions of $d$-valued variables is given by $1$ (recall that our logarithms are base $d$)
and follows from the fact that classical mutual information cannot exceed the entropy of each variable.
On the other hand, quantum mutual information is bounded by $2$ and this is the bound on $\mathcal{D}_N$ optimized over quantum states (see Appendix~\ref{APP_MAX_STATE}).
{\color{black}As seen, pure quantum states satisfy the classical bound of $1$}, but there exist \textit{mixed} states belonging to the class of $k$-uniform states, in particular for $k = N-1$~\cite{kuniform}, achieving the bound of $2$. In the case of $N$ qubits (for $N$ even) the optimal states have the following form
\begin{eqnarray}
\rho_{\max} & = & \frac{1}{2^N} \left( \sigma_0^{\otimes N} + (-1)^{N/2} \sum_{j = 1}^3 \sigma_j^{\otimes N}  \right),
\label{EQ_RHO_MAX}
\end{eqnarray}
where $\sigma_j$ are the Pauli matrices and $\sigma_0$ denotes the $2 \times 2$ identity matrix.
Note that $\rho_{\max}$ is permutationally invariant
and gives rise to perfect correlations or anti-correlations when all observers measure locally the same Pauli observable.
These states are known as the generalized bound entangled Smolin states \cite{smolin1,smolin2}. 
They are a useful quantum resource for multiparty communication schemes~\cite{Remik2} and were experimentally demonstrated in Refs.~\cite{smolin3,Lavoie2010comment,Amselem2010,Lavoie2010,Barreiro2010,Amselem2013}.
Per definition for $(N-1)$-uniform states all reduced density matrices are maximally mixed, with vanishing mutual information, whereas the whole system is correlated.
In Appendix~\ref{APP_MAX_STATE} we provide examples of states which maximize $\mathcal{D}_N$ for arbitrary $d$ and show in general that the only states achieving the maximal quantum value of $2$ are $(N-1)$-uniform.

Let us also offer an intuition for values of $\mathcal{D}_N$ above the classical bound of one.
As shown in Appendix~\ref{ogr2} this can only happen for mixed quantum states.
One could then consider an auxiliary system which purifies the mixed state.
High values of $\mathcal{D}_N$ correspond to learning simultaneously the variables of the subsystems and the auxiliary system.
Note that making this statement mathematically precise may be difficult as the problem is equivalent to the interpretation of negative values of conditional entropy~\cite{Horodecki2005,Rio2011,Chuan2012}.

A practical implication of the fact that mixed states achieve maximal dependence is that coupling an initially closed system to an environment can improve multipartite dependencies within the system.
Furthermore, this is the only way of generating non-classical values of $\mathcal{D}_N$ as dynamics within a closed system initialised in a pure state cannot beat the classical bound.

\subsection{Comparison with multipartite correlations}

Let us begin with a simple example that illustrates the difference between multipartite correlations and multipartite dependence.
Consider three classical binary random variables described by the joint probability distribution $P(000) = P(111) = \frac{1}{2}$.
All three variables are clearly correlated as confirmed, e.g., by quantifiers introduced in Refs.~\cite{Giorgi,Girolami2017}.
However, the knowledge of, say, the first party about the third party does not increase if the first observer is allowed to cooperate with the second one.
By examining her data, the first observer knows the variables of both remaining parties and any cooperation with one of them does not change this.
There is no information gain and hence this distribution has vanishing tripartite dependence.

On the other hand, let us consider the joint probability distribution with $P(000) = P(011) = P(101) = P(110) = \frac{1}{4}$, which can describe also a classical system.
Any two variables in this distribution are completely uncorrelated, but any two parties can perfectly decode the value of the remaining variable.
Hence the gain from cooperation is $1$ and so is the value of $\mathcal{D}_3$.
%This quantifier is thus very good for identifying the suitability of a system for secret sharing, where the secret could be at any party.

Nevertheless, $\mathcal{D}_N$ does satisfy a number of properties that are expected from measures of genuine multipartite correlations.
Any such quantifier should satisfy a set of postulates put forward in Refs.~\cite{Grudka,Girolami2017}.
We now show that most of them also hold for $\mathcal{D}_N$ and we precisely characterize the deviation from one of the postulates.
In Appendices A-C we prove the following properties of the dependence:
\begin{itemize}

\item[(i)] If $\mathcal{D}_N = 0$ and one adds a party in a product state then the resulting $(N+1)$-party state has $\mathcal{D}_{N} = 0$.

\item[(ii)] If $\mathcal{D}_N = 0$ and one subsystem is split with two of its parts placed in different laboratories then the resulting $(N+1)$-party state has $\mathcal{D}_{N+1} = 0$.

\item[(iii)] $\mathcal{D}_N$ can increase under local operations. Let us denote with the bar the quantities computed after local operations. We have the following inequality:
\begin{eqnarray}
\label{EQ_D_NOMONO}
\overline{\mathcal{D}}_N & \le & \mathcal{D}_N + I(X_1 X_2 : X_3 \dots X_N) \nonumber \\
&& \qquad - I(X_1 X_2 : \overline X_3 \dots \overline X_N),
\end{eqnarray}
where systems $X_1$ and $X_2$ are the ones minimizing $\mathcal{D}_N$, i.e., before the operations were applied.

\end{itemize}

The properties (i) and (ii) hold for all quantifiers of multipartite correlations.
It is expected that measures of multipartite correlations are also monotonic under local operations (though note that often this condition is relaxed in practice, see e.g. {\color{black}quantum discord~\cite{Ollivier2001,Modi2010,serra2011})}. In the present case, the monotonicity property does not hold in general for $\mathcal{D}_N$, however, property (iii) puts a bound on its maximal violation {\color{black} (see Appendix~\ref{APP_LOUP} for a concrete example)}.
Moreover, it has a clear interpretation: local operations that uncorrelate a given subsystem from the others may lead to information gain when the less correlated party cooperates with other parties.

Let us explain this more quantitatively for the conditional mutual information between variables $X_1$ and $X_2$.
While it is well-known that this quantity is monotonic under local operations on subsystems not in the condition~\cite{Wilde2018},
we prove in Appendix~\ref{APP_LOCI} that the following inequality is satisfied under local operations on arbitrary subsystem (being the origin of property (iii)):
\begin{eqnarray}
&& I(\overline{X}_1 : \overline{X}_2 | \overline{X}_3 \dots \overline{X}_N) \le I(X_1 : X_2 | X_3 \dots X_N) \nonumber \\
& + & I(X_1 X_2 : X_3 \dots X_N) - I( X_1 X_2 : \overline{X}_3 \dots \overline{X}_N).
\end{eqnarray}
The second line is non-negative due to the data processing inequality and it quantifies how much the local operations have uncorrelated the variables in the condition $X_3 \dots X_N$ from the variables $X_1 X_2$.
This sets the upper bound to the lack of monotonicity of the conditional mutual information.

{\color{black}

Let us also mention that the lack of monotonicity under local operations has been discussed in the context of complexity measures for multipartite systems~\cite{Zhou2009,KOJA2009,GG2012}. Those measures involve a different concept of correlation, in terms of the number of particles that have to be coupled in the Hamiltonian for which the discussed state is a thermal state. However, these measures a similar spirit to the dependence in that they capture the improvement in approximating a given distribution when more and more particles are coupled in the Hamiltonian.

\subsection{Partial extension of classical interpretations}
As shown, in classical information theory $\mathcal{D}_N$ has a clear interpretation as information gain from cooperation.
Agents could measure their subsystems and the information gain from the outcomes would match the computed value of $\mathcal{D}_N$.
In the following we partially extend this interpretation to quantum systems, when the values of $\mathcal{D}_N$ do not exceed unity.
Let us proceed by comparing examples of classical and quantum distributions.
As shown in the previous section a classical mixture of $P(000) = P(111) = \frac{1}{2}$ admits $\mathcal{D}_3 = 0$.
To the contrary, a superposition $\frac{1}{\sqrt{2}}(| 000 \rangle + | 111 \rangle)$ admits $\mathcal{D}_3 = 1$.
The quantum coherence here improves the dependence because we can measure the GHZ state in a different basis and the classical dataset obtained gives rise to $\mathcal{D}_3 = 1$.
The relevant basis is to measure each qubit along $\frac{1}{\sqrt{2}}(|0 \rangle \pm | 1 \rangle)$ directions.
%One easily verifies that in this way classical distribution of outcomes is $P(000) = P(011) = P(101) = P(110) = \frac{1}{4}$ and indeed achieves $D_3 = 1$.

As another example consider the classical distribution $P(100) = P(010) = P(001) =  \frac{1}{3}$ which gives rise to $\mathcal{D}_3 = \frac{2}{3}$.
The corresponding quantum superposition $| W \rangle = \frac{1}{\sqrt{3}}(| 100 \rangle + | 010 \rangle + |001 \rangle )$ has a rather non-trivial value of dependence given by $\mathcal{D}_3 = 0.9183$.
However, this does not imply that there exists a set of local measurements on $| W \rangle$ which yields a classical distribution with higher dependence. In fact, by optimising the dependence of local measurement results over all projective quantum measurements one obtains the classical $\frac{2}{3}$.
Note, however, that $\mathcal{D}_3$ compares the information two systems have about the third one with the pairwise information.
It is therefore natural to also consider joint measurements on two parties.
We have therefore computed the post-measurement state 
$\sum_{i,j} \Pi_j^{12} \otimes \Pi_k^{3} | W \rangle \langle W | \Pi_j^{12} \otimes \Pi_k^{3}$,
where $\Pi_j^{12}$ are the rank-one projectors on the first two particles and correspondingly $\Pi_k^{3}$ are for the last qubit.
Indeed, when we optimise over these projectors the dependence of the post-measurement state precisely matches the value computed for the W state, i.e. $\mathcal{D}_3 = 0.9183$.

The examples given suggest that perhaps the interpretation of the quantum value of $\mathcal{D}_N$ (whenever not exceeding unity) can be given as the highest dependence of the classical dataset that can be measured on the quantum state (including joint measurements).
It turns out that this is in general not the case.
We have found examples of pure four-qubit states for which there exist local measurements with outcomes producing $\mathcal{D}_4$ higher than that of the corresponding quantum states. 
This further demonstrates property (iii) listed above.

}

%%%%%%%%%%%%%%%%%%%%%%%%%%%%%%%%%%%%%%%%%%%%%%%%%%%%%%%%%%%%%%%%%%%%%%%%%%%%%%%%%%
%%%%%%%%%%%%%%%%%%%%%%%%%%%%%%%%%%%%%%%%%%%%%%%%%%%%%%%%%%%%%%%%%%%%%%%%%%%%%%%%%%

\section{Applications and examples}

Multipartite dependence can be computed for both classical and quantum systems 
and is a generic quantifier of information gain from cooperation that can be used across science.
Here we discuss examples of applications of $\mathcal{D}_N$ in quantum information {\color{black} and briefly mention its role in data science.}

%%%%%%%%%%%%%%%%%%%%%%%%%%%%%%%%%%%%%%%%%%%%%%%%%%%%%%%%%%%%%%%%%%%%%%%%%%%%%%%%%%

\subsection{Quantum secret sharing}

An intuitive application of $\mathcal{D}_N$ is secret sharing~\cite{Shamir79,Blakley79,Hillery1999,Imai2005}
{\color{black} with the additional constraint that the secret could be shared by any party. We refer to this problem as {\em symmetric secret sharing}.}
In the tripartite setting it requires collaboration of two parties in order to read out the secret of the remaining party. 
In the classical version of this problem the secret is a random variable, e.g., the measurement outcome of, say, the first observer.
It is thus required that both, the second as well as the third party alone has only little or no information about the secret, i.e., $I(X_1 : X_2)$ and $I(X_1 : X_3)$ are small,
while both of them together can reveal the result of the first observer, i.e., $I(X_1 : X_2 X_3)$ is large or unity.
Clearly, $\mathcal{D}_3$ is the relevant figure of merit and due to the minimization in (\ref{EQ_D3}), the secret can be generated at any party.
The states $\rho_{\max}$ derived above appear well suited for this task and since they admit perfect correlations along complementary local measurements, by following the protocol in~\cite{Hillery1999}, the quantum solution to the secret sharing problem offers additionally security against eavesdropping.

{\color{black} This security has also been recently quantified with the conditional quantum mutual information in Ref.~\cite{Sharma2020}. The dependence introduced here is in contrast taking into account that any participant could be sharing the secret.
	
	Let us also define the \emph{quantum} secret sharing task and demonstrate the relevance of the dependence in this context.
	Suppose Alice has a quantum state $\rho$, called the secret, which she wants to split into $n$ shares such that the secret is recoverable only when a party has all $n$ shares.
	A quantum secret sharing scheme \cite{Imai2005} is a map $\mathcal{E}_n: A \to X^{\otimes n}$ such that,
	\begin{equation}
	C_Q(\textrm{Tr}_k \circ \mathcal{E}_n) = 0
	\end{equation}
	where $\textrm{Tr}_k$ is the partial trace over an arbitrary set of subsystems and $C_Q(\Lambda)$ is the quantum capacity of the channel $\Lambda$.
	This condition encodes the requirement that from any subset of shares one is not capable of recovering the initial quantum information in $\rho$.
	The rate of a quantum secret sharing scheme is given by the quantum capacity of the channel $\mathcal{E}_n$.
	%Asymptotically, a scheme with rate $R$ allows sharing quantum secrets of size $nR$ using $n$ copies of the channel.
	
	The concrete protocol utilizing $\rho_{\max}$ is as follows.
	Consider a quantum secret $\rho$ of a single qubit.
	Using the teleportation protocol with state $\rho$ and one subsystem of $\rho_{\max}$ in Alice's possession, she performs the
	encoding map $\mathcal{E}_{N-1}$ to her qubit:
	\begin{equation}
	\mathcal{E}_{N-1} (\rho) = \frac{1}{2^{N-1}} \left( \sigma_0^{\otimes N-1} + (-1)^{N/2} \sum_{j = 1}^3 \sigma_j^{\otimes N-1} \textrm{Tr} \left( \sigma_j^T \rho \right)  \right).
	\end{equation}
	In this way Alice prepares $n = N-1$ shares of the secret. We denote the state above as $\rho_{N-1}$.
	Since for any $\rho$ we have $(\textrm{Tr}_k \circ \mathcal{E}_{N-1}) (\rho) \propto \identity$, it follows that $C_Q(\textrm{Tr}_k \circ \mathcal{E}_{N-1}) = 0$, i.e., no subset of observers can recover the quantum secret.
	All of them together, however, can recover it perfectly with the following ``reverse'' teleportation scheme.
	Again take the resource $(N-1)$-uniform state
	and conduct a measurement in the basis $\{(\openone \otimes \dots \otimes \openone \otimes \sigma_{\mu_1} \otimes \dots \otimes \sigma_{\mu_{N-1}}) | \Psi \rangle \}$,
	where $\mu_n = 0,1,2,3$ and 
	\begin{equation}
	| \Psi \rangle = \frac{1}{\sqrt{2^{N-1}}} \sum_{j_1 \dots j_{N-1} = 0,1} | j_1 \dots j_{N-1} \rangle \otimes | j_1 \dots j_{N-1} \rangle
	\end{equation}
	is the maximally entangled state.
	This measurement is conducted on the shares $\rho_{N-1}$ and $N-1$ subsystems of $\rho_{\max}$.
	If the measurement result corresponds to $| \Psi \rangle$ the remaining qubit is in the state $\rho$, otherwise there exist unitary operations depending on the result that transform the single qubit output to $\rho$.
	
	We now show that any $N$-partite state $\rho_c$ with maximally mixed marginals and non-classical dependence $\mathcal{D}_{N}(\rho_c) > 1$ is useful for the quantum secret sharing.
	Consider the encoding map $\mathcal{E}_{c}: A \to X^{\otimes N-1}$
	with the Choi state given by $\rho_c$~\cite{Choi1975}, i.e., $(\openone \otimes \mathcal{E}_c)(|\Phi \rangle \langle \Phi |) = \rho_c$, where $| \Phi \rangle$ is the maximally entangled state.
	The rate of quantum secret sharing admits the lower bound
	\begin{subequations}
		\begin{align}
		R
		&= C_Q ( \mathcal{E}_{c} ) \label{eq:ssr2} \\
		& \ge  \sup_{\phi_{A' A}} - S_{A'|X_1 ... X_{N-1}}\left( ( \identity \otimes \mathcal{E}_{c} ) (\phi_{A' A}) \right) \label{eq:ssr3} \\
		&\geq - S_{A'|X_1 ... X_{N-1}} (\rho_c) \label{eq:ssr4} \\
		&= I(A':X_{1} | X_2 ... X_{N-1}) - S(A'|X_{2} ... X_{N-1}) \label{eq:ssr5} \\
		%	&\geq& I(A:X_{1} | X_2 ... X_{N}) - S(A) \label{eq:ssr6}\\
		&= I(A':X_{1} | X_2 ... X_{N-1}) - 1 \label{eq:ssr7} \\
		& \geq  \mathcal{D}_{N}(\rho_c) - 1. \label{EQ_TIGHT}
		\end{align}
	\end{subequations}
	The steps are justified as follows.
	The first line follows from definition.
	Ineq. (\ref{eq:ssr3}) is the result of computing the quantum capacity of a channel \cite{Schumacher1996,Schumacher1996a,Barnum1998,Barnum2000,Lloyd1997,Devetak2003} with system $A'$ being of the same dimension as system $A$ {\color{black} and where $S_{A'|X_1 ... X_{N-1}} (\rho)$ is the quantum conditional entropy of state $\rho$},
	(\ref{eq:ssr4}) follows because the maximally entangled state is a particular choice of $\phi_{A'A}$, and the Choi state of $\mathcal{E}_{c}$ is $\rho_c$.
	Eq. (\ref{eq:ssr5}) follows from the properties of entropy recalling that our logarithms are base $d$.
	Eq. (\ref{eq:ssr7}) follows from the assumption that the state has maximally mixed marginals.
	Finally, the dependence is the worst case conditional mutual information.
	
	Since the marginals of $\rho_c$ are maximally mixed, the same holds for the encoded state $\rho_{N-1} = \mathcal{E}_c(\rho)$, i.e., no subset of parties can recover the quantum secret alone,
	yet for all of them together $R > 0$ holds for $\mathcal{D}_{N}(\rho_c) > 1$.
	
	This lower bound is in general not tight, although it is achieved, e.g., by the $(N-1)$-uniform states~\cite{Helwig2012}.
	In fact, all degradable channels give rise to the equality in (\ref{eq:ssr3}) and all symmetric states admit the equality in (\ref{EQ_TIGHT}).
	Note that for all pure states the lower bound on the rate is zero, whereas, e.g., \cite{Hillery1999} gives a quantum secret sharing scheme using a GHZ state with a unit rate.
	In this particular case it is easy to generalize the proof above.
	Since the GHZ state has classically correlated marginals (not maximally mixed), the conditional entropy in Eq. (\ref{eq:ssr5}) vanishes and the rate is lower bounded by the dependence alone, which is $1$ for the GHZ state.
	
	Finally, note that also situations where a subset of $k$ parties is required to read the secret are of practical interest. Analogical arguments to the ones just given show that the dependence $\mathcal{D}_k$ is the figure of merit for sharing the secret among any $k$-partite subsystem of $N$-party state where anyone could be the secret sharer.
}

%%%%%%%%%%%%%%%%%%%%%%%%%%%%%%%%%%%%%%%%%%%%%%%%%%%%%%%%%%%%%%%%%%%%%%%%%%%%%%%%%%

{\color{black}
	\subsection{Witnessing entanglement}
	
	As derived above, the values of $\mathcal{D}_N$ exceeding $1$ indicate that quantum systems are being measured.
	Furthermore, the quantum state of the systems is not pure.
	We now show that such values witness quantum entanglement.
	
{\color{black} By our assumption $\mathcal{D}_N > 1$.
Per definition of dependence, it is the smallest difference of mutual informations.
Let us label the subsystems such that this minimum is
$\mathcal{D}_N = I(X_1:X_2 X_3 \dots X_N) - I(X_1:X_3 \dots X_N)$,
Since the second term is never positive, it is the first mutual information that has to be bigger than one, i.e. $I(X_1:X_2 X_3 \dots X_N) > 1$.
Writing the mutual information in terms of the quantum conditional entropy then gives
	\begin{equation}
	S_{X_1 | X_2 X_3 \dots X_N}(\rho) < -1 + S(\rho_1),
	\end{equation}
	For subsystems with the same dimension $S(\rho_1) \le 1$ and accordingly
	the conditional mutual information is negative.
	As shown by Cerf and Adami, this is only possible for entangled states~\cite{CerfAdami1999}.}
	Note that this entanglement does not have to be genuinely multipartite.
	{\color{black} An example of a particular state violating the bound of $1$ is given in Eq.~(\ref{EQ_RHO_MAX}).
	It also nicely demonstrates the point just given as it can be written as a mixture of correlated Bell states~\cite{smolin1} and therefore is biseparable. 
Furthermore, the proof can be repeated for any number of subsystems $k$, i.e. if $\mathcal{D}_k > 1$ any $k$-party subsystem is entangled. A concrete example where this is the case is given by the four-party subsystems of absolutely maximally entangled state of $6$ qubits (see Tab.~\ref{TAB_TH_STATES}).	
	}
	%Although $\mathcal{D}_N > 1$ does not detect all entangled states, it is especially useful for mixed states where other entanglement indicators exhibit difficulties.
}

%%%%%%%%%%%%%%%%%%%%%%%%%%%%%%%%%%%%%%%%%%%%%%%%%%%%%%%%%%%%%%%%%%%%%%%%%%%%%%%%%%

{\color{black}
\subsection{Entanglement without dependence}
}

An intriguing question in the theory of multipartite entanglement is whether entanglement can exist without classical multipartite correlations~\cite{Kaszlikowski2008}.
The examples of $N$-party entangled states with vanishing $N$-party classical correlations are known in the literature~\cite{Lask2012,Schwemmer2015,Designolle2017,Tran2017,Klobus2019}, though the corresponding notions of classical correlations do not satisfy all the postulates of Refs.~\cite{Grudka,Girolami2017}.
Here we ask whether there are genuinely multipartite entangled states with no multipartite dependence.

It turns out there exist even pure genuinely multipartite entangled states without multipartite dependence.
Consider any $N$-qudit cluster state (including linear, ring, 2D, etc.) for $N \geq 4$.
It was shown in Ref.~\cite{PhysRevA.82.012337} that all single-particle subsystems are completely mixed and there exists at least one pair of subsystems in the bipartite completely mixed state.
The corresponding entropies are equal to $S(\rho_i) = 1$ and $S(\rho_{ij})=2$, and lead to $\mathcal{D}_N = 0$, due to Eq. (\ref{pure}). 
Therefore, the information about a particular subsystem cannot be increased when other subsystems are brought together which explains the impossibility of the corresponding secret sharing task~\cite{cluster-secret-sharing,cluster-secret-sharing-qudit,cluster-secret-sharing-erratum}.
Note that there exist other subsets of observers who can successfully run secret sharing using a cluster state.

{\color{black}
\subsection{Dependence without correlations}
}

Similarly we ask whether multipartite dependence can exist without multipartite correlations and vice versa.
It turns out that all combinations are possible.
The cluster states discussed in the previous subsection give rise to multipartite correlations and therefore show that multipartite correlations can exist without multipartite dependence.

Conversely, the dependence can be non-zero even in states with no correlations whatsoever.
To this end consider the state $\rho_{\mathrm{nc}} = \frac{1}{2} | D_N^1 \rangle \langle D_N^1| + \frac{1}{2} | D_N^{N-1} \rangle \langle D_N^{N-1} |$. 
It is $N$-party entangled and has vanishing all $N$-partite correlation functions~\cite{Kaszlikowski2008}. 
Yet, its $\mathcal{D}_N$ is finite as shown in Tab.~\ref{TAB_TH_STATES}.
This again shows that multipartite dependence is distinct from multipartite correlations and captures other properties of genuinely multi-partite entangled systems.

%%%%%%%%%%%%%%%%%%%%%%%%%%%%%%%%%%%%%%%%%%%%%%%%%%%%%%%%%%%%%%%%%%%%%%%%%%%%%%%%%%

\subsection{Experimental states}

We move to multipartite dependence in quantum optics experiments.
{\color{black} Table~\ref{tabela} gathers quantum states prepared with photonic qubits. 
	Details of the experimental preparation of the states and the evaluation of the dependence are given in Appendix~\ref{APP_EXP}.}
We have chosen to present the states illustrating the properties discussed above.

The experimental data is in good agreement with the theoretical calculations.
Deviations for the six qubit state $D_{6}^{3}$ result from reduced fidelities due to contributions of higher order noise in the state preparation. 
The same applies to the five qubit state $\rho_{\textrm{nc},5}$ derived from $D_{6}^{3}$.
Indeed, the states denoted as $\rho_{\mathrm{nc}}$, which have vanishing correlation functions between all $N$ observers~\cite{Schwemmer2015}, clearly show a non-vanishing value for $\mathcal{D}_N$.
Hence, these states are examples for ``entanglement without correlations'' and ``dependence without correlations''.
Similarly, the experimental data of the linear cluster state $L_4$ indicates ``entanglement without dependence'' and ``correlations without dependence''.
In the experiment, the GHZ$_4$ state $\sim \ket{0000} + \ket{1111}$ achieves the highest dependence of all considered states and is close to the theoretical dependence $\mathcal{D}_4=1$, which is maximal over all pure states.
The small value of $\mathcal{D}_3$ for the four-partite GHZ state reflects its property of having vanishing dependence for all tripartite classically correlated subsystems.

\begin{table}
	\begin{tabular}{c c c c c c} \hline \hline
		$N$ & state & $\mathcal{D}_3$ &$\mathcal{D}_4$ &$\mathcal{D}_5$ &$\mathcal{D}_6$  \\ \hline
		3 & $D_{3}^{1}$& 0.79 (0.92) & - & - & - \\ 
		3 & $D_{3}^{2}$& 0.82 (0.92) & - & - & - \\ 
		3 & $\rho_{\textrm{nc},3}$ & 0.44 (0.50) & - & - & -   \\  \hline
		4 & GHZ$_4$ & 0.06 (0.00) & 0.95 (1.00) & - & - \\
		4 & $D_{4}^{2}$ & 0.41 (0.50) & 0.66 (0.75) &  - & - \\ 
		4 & $L_{4}$ & 0.90 (1.00) & 0.09 (0.00) & - & - \\ 
		4 & $\Psi_4$ & 0.33 (0.42) & 0.39 (0.42) & - & -  \\ \hline
		5 & $D_{5}^{2}$ & 0.22  (0.32) & 0.17 (0.32) & 0.21 (0.65) & -   \\ 
		5 & $D_{5}^{3}$ & 0.23  (0.32) & 0.19 (0.32) & 0.22 (0.65) & -   \\ 
		5 & $\rho_{\textrm{nc},5}$ & 0.21  (0.17) & 0.15 (0.65) & 0.15 (0.47) & -   \\ \hline
		6 & $D_{6}^{3}$ & 0.21 (0.27) & 0.14 (0.20) &  0.15 (0.27) & 0.19 (0.63)  \\ 
		\hline \hline
	\end{tabular}
	\caption{\label{tabela} Illustrative values of dependence for several experimental quantum states.
		In brackets we give theoretical predictions for ideal states.}
\end{table}

{\color{black}	

	\subsection{Data science}
	
	The dependence is also expected to find applications outside physics and we briefly sketch how it can be useful in data science.
	The problem of feature selection is to reduce available data to a smaller subset that faithfully represents the whole dataset,
	so that the predictions made on the basis of the subset would be the same as based on the entire set.
	In other words, we would like to eliminate variables that are not important.
	
	Such variables can be identified from the conditions of extremal dependence.
	Other functions of conditional mutual information were considered in Refs.~\cite{Yang1999,Brown2012}.
	Let us first analyze the case of $\mathcal{D}_N \approx 0$.
	For example, if the minimizing conditional information is
	$I(X_1 : X_2 | X_3 \dots X_N) \approx 0$ then either variable $X_1$ or $X_2$ can be eliminated as it does not improve the information between the remaining variables,
	e.g., $I(X_1 : X_2 X_3 \dots X_N) \approx I(X_1 : X_3 \dots X_N)$.
	On the other hand, $\mathcal{D}_N \approx 1$ corresponds to the situation where each variable is independent of the rest, e.g., $I(X_1 : X_3 \dots X_N) \approx 0$, (and hence at first sight one would have to keep track of all of them), but from $N-1$ variables one can predict the remaining one, e.g. $I(X_1 : X_2 X_3 \dots X_N) \approx 1$.  Accordingly, one variable can be eliminated.
}

%%%%%%%%%%%%%%%%%%%%%%%%%%%%%%%%%%%%%%%%%%%%%%%%%%%%%%%%%%%%%%%%%%%%%%%%%%%%%%%%%%%
%%%%%%%%%%%%%%%%%%%%%%%%%%%%%%%%%%%%%%%%%%%%%%%%%%%%%%%%%%%%%%%%%%%%%%%%%%%%%%%%%%%

\section{Conclusions}

{\color{black} We have introduced a quantity, the multipartite dependence, as new tool for the characterisation of quantum states. It is surely the method of choice to determine whether and by what amount cooperation between any subsystems brings additional information about the remaining subsystems. It offers an extension to the characterization of multipartite properties via multipartite correlations.
The dependence is directly calculable and has a clear interpretation. As such, several applications are identfied here clearly indicating its relevance for future studies in quantum communication and elsewhere.}

\begin{acknowledgments}
We thank Krzysztof Szczygielski for valuable discussions. 
The work is supported by DFG (Germany) and NCN (Poland) within the joint funding initiative ``Beethoven2'' (2016/23/G/ST2/04273, 381445721), 
by the Singapore Ministry of Education Academic Research Fund Tier 2 Project No. MOE2015-T2-2-034, and by Polish National Agency for Academic Exchange NAWA Project No. PPN/PPO/2018/1/00007/U/00001.
W.L. and R.G. acknowledge partial support by the Foundation for Polish Science (IRAP project, ICTQT, Contract
No. 2018/MAB/5, cofinanced by EU via Smart Growth Operational Programme).
JD and LK acknowledge support from the PhD programs IMPRS-QST and ExQM, respectively.
JDMA is funded by the Deutsche Forschungsgemeinschaft (DFG, German Research Foundation) under Germany's Excellence Strategy - EXC-2111 - 390814868.
\end{acknowledgments}

\appendix

\section{Proof of property (i)}

If $\mathcal{D}_N = 0$ and one adds a party in a product state then the resulting $(N+1)$-partite state has $\mathcal{D}_{N} = 0$.

\begin{proof}
Per definition, we are minimizing the conditional mutual information over all $N$-partite subsystems of the total $(N+1)$-party state.
If one takes the $N$-partite subsystem that excludes the added party, by assumptions $\mathcal{D}_N = 0$.
\end{proof}

In other words, if the cooperation of $N-1$ parties within the $N$-partite system does not help in gaining additional knowledge about any other remaining party, then the cooperation with any additional independent system will not help either.

\section{Proof of property (ii)}

If $\mathcal{D}_N = 0$ and one subsystem is split with two of its parts placed in different laboratories then the resulting $(N+1)$-party state has $\mathcal{D}_{N+1} = 0$.

\begin{proof}
Without loss of generality and in order to simplify notation let us consider an initially tripartite system where the third party is in possession of two variables labeled $X_3$ and $X_4$.
The splitting operation places these variables in separate laboratories producing a four-partite system.
By assumption $\mathcal{D}_3 = 0$, but this does not specify which conditional mutual information in Eq.~(\ref{EQ_D3}) vanishes.
If this is the mutual information where the variables $X_3$ and $X_4$ of the third party enter in the condition, then this mutual information is also minimizing $\mathcal{D}_4$, and hence the latter vanishes.
The second possibility is that the variables of the third party enter outside the condition, e.g., the vanishing conditional mutual information could be $I(X_1: X_3 X_4 | X_2)$.
From the chain rule for mutual information, $0=I(X_1: X_3 X_4 | X_2) \ge I(X_1 : X_4 | X_2 X_3)$.
Finally, from strong subadditivity follows $\mathcal{D}_4 = 0$.
In the $N$-partite case one writes more variables in the conditions and follows the same steps.
\end{proof}

\section{Proof of property (iii)}
\label{APP_LOCI}

Consider a state $\rho$ that is processed by general local operations (CPTP maps) to a state $\overline{\rho}$.
The following upper bound on the multipartite dependence after local operations holds:
\begin{eqnarray}
\overline{\mathcal{D}}_N & \le & \mathcal{D}_N + I(X_1 X_2 : X_3 \dots X_N) \nonumber \\
&& \qquad - I(X_1 X_2 : \overline X_3 \dots \overline X_N),
\end{eqnarray}
where systems $X_1$ and $X_2$ are the ones minimizing $\mathcal{D}_N$, i.e., before the operations were applied.

Let us begin with a lemma characterizing the lack of monotonicity of conditional mutual information under local operations.
\begin{lemma}
\label{LM_MONO}
The following inequality holds:
\begin{eqnarray}
&& I(\overline{X}_1 : \overline{X}_2 | \overline{X}_3 \dots \overline{X}_N) \le I(X_1 : X_2 | X_3 \dots X_N) \nonumber \\
& + & I(X_1 X_2 : X_3 \dots X_N) - I( X_1 X_2 : \overline{X}_3 \dots \overline{X}_N),
\end{eqnarray}
where bars denote subsystems transformed by arbitrary local CPTP maps.
\end{lemma}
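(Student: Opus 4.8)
The plan is to separate the local operations into those acting on the two ``active'' subsystems $X_1, X_2$ (which sit outside the condition) and those acting on the conditioning block $C \equiv X_3 \dots X_N$, and to control the two contributions with different tools. Since the maps are local and independent, I may apply them in any order: first transform only the condition, $C \to \overline{C} \equiv \overline{X}_3 \dots \overline{X}_N$, producing an intermediate state on $X_1 X_2 \overline{C}$, and only afterwards transform $X_1 X_2 \to \overline{X}_1 \overline{X}_2$.

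First I would dispose of the operations on $X_1$ and $X_2$. Since these systems do not appear in the condition, the conditional mutual information is monotonic under CPTP maps applied to them (the standard fact quoted in the main text, Ref.~\cite{Wilde2018}), so
\[
I(\overline{X}_1 : \overline{X}_2 | \overline{C}) \le I(X_1 : X_2 | \overline{C}).
\]
It then remains to bound $I(X_1 : X_2 | \overline{C})$ --- in which only the conditioning block has been degraded --- by the right-hand side of the lemma.

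For that step I would dilate the channel $C \to \overline{C}$ to a Stinespring isometry $U : \mathcal{H}_C \to \mathcal{H}_{\overline{C}} \otimes \mathcal{H}_E$ and pass to the state on $X_1 X_2 \overline{C} E$ whose marginal on $X_1 X_2 \overline{C}$ is the post-channel state. Because $U$ acts only on $C$, every entropy is unchanged when $C$ is replaced by $\overline{C} E$; in particular conditioning on $\overline{C} E$ equals conditioning on $C$, giving $I(X_1 : X_2 | \overline{C} E) = I(X_1 : X_2 | C)$ and $I(X_1 X_2 : \overline{C} E) = I(X_1 X_2 : C)$. Expanding $I(X_1 : X_2 E | \overline{C})$ by the chain rule in the two possible orders yields
\[
I(X_1 : X_2 | \overline{C}) = I(X_1 : X_2 | \overline{C} E) + I(X_1 : E | \overline{C}) - I(X_1 : E | X_2 \overline{C}).
\]
Dropping the last term (nonnegative by strong subadditivity) and enlarging $I(X_1 : E | \overline{C}) \le I(X_1 X_2 : E | \overline{C})$ gives $I(X_1 : X_2 | \overline{C}) \le I(X_1 : X_2 | C) + I(X_1 X_2 : E | \overline{C})$. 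A final chain-rule identity, $I(X_1 X_2 : E | \overline{C}) = I(X_1 X_2 : \overline{C} E) - I(X_1 X_2 : \overline{C}) = I(X_1 X_2 : C) - I(X_1 X_2 : \overline{C})$, converts the environment term into exactly the correction appearing in the lemma; combining with the monotonicity step completes the argument.

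The main obstacle --- and the genuine content of the proof --- is the conditioning-block step, because conditional mutual information is \emph{not} monotone when the conditioning system is processed, so data processing cannot be invoked directly. The device that rescues the estimate is the dilation: replacing the lossy map on $C$ by an isometry onto $\overline{C} E$ makes ``conditioning on $C$'' literally equal to ``conditioning on $\overline{C} E$,'' and the deficit incurred by discarding the environment $E$ is precisely the correlation $I(X_1 X_2 : E | \overline{C})$ that strong subadditivity allows me to bound. I would need to verify that each chain-rule expansion and each nonnegativity invocation is applied to the correct reduced state, and note that the classical inequality is recovered as the special case of classical channels, but no additional estimates are required.
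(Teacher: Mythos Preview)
Your proof is correct, and the opening monotonicity step on $X_1,X_2$ is identical to the paper's. The treatment of the conditioning block, however, differs genuinely. The paper avoids dilation altogether: it rewrites $I(X_1:X_2|\overline{C}) + I(X_1X_2:\overline{C})$ algebraically as $I(X_1:X_2\overline{C}) + I(X_2:X_1\overline{C}) - I(X_1:X_2)$, so that every term containing $\overline{C}$ is an ordinary mutual information with $\overline{C}$ on one side; then two direct applications of data processing replace $\overline{C}$ by $C$, and the same identity run backwards yields $I(X_1:X_2|C) + I(X_1X_2:C)$. Your route instead Stinespring-dilates $C \to \overline{C}E$, uses the chain rule to expose $I(X_1X_2:E|\overline{C})$ as the deficit, and then identifies that deficit with $I(X_1X_2:C) - I(X_1X_2:\overline{C})$. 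The paper's argument is shorter and more elementary (only entropy identities plus data processing), while yours buys an operational reading of the correction term as the correlation leaked to the environment, making it transparent that the bound is saturated when the channel on $C$ is reversible.
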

\begin{proof}
The conditional mutual information is already known to be monotonic under operations on systems not in the condition~\cite{Wilde2018}:
\begin{eqnarray}
I(\overline{X}_1 : \overline{X}_2 | \overline{X}_3 \dots \overline{X}_N) \le I( X_1 : X_2 | \overline{X}_3 \dots \overline{X}_N)
\end{eqnarray}
Now we continue as follows:
\begin{eqnarray*}
&& I( X_1 : X_2 | \overline{X}_3 \dots \overline{X}_N) + I( X_1 X_2 : \overline{X}_3 \dots \overline{X}_N) \\
& = & I( X_1 : X_2 \overline{X}_3 \dots \overline{X}_N) + I( X_2 : X_1 \overline{X}_3 \dots \overline{X}_N) \\
& & - I(X_1 : X_2)  \\
& \le & I( X_1 : X_2 X_3 \dots X_N) + I( X_2 : X_1 X_3 \dots X_N) \\
&& - I(X_1 : X_2)  \\
& = & I( X_1 : X_2 | X_3 \dots X_N) + I( X_1 X_2 : X_3 \dots X_N),
\end{eqnarray*}
where the first equation is obtained by manipulating entropies such that the mutual informations containing barred subsystems come with positive sign, next we used the data processing inequality and in the last step we reversed the manipulations on entropies.
This completes the proof of the lemma.
\end{proof}

To complete the proof of property (iii) we write 
\begin{eqnarray*}
\mathcal{D}_N & = & I( X_1 : X_2 | X_3 \dots X_N) \\
& \ge & I(\overline{X}_1 : \overline{X}_2 | \overline{X}_3 \dots \overline{X}_N) - I(X_1 X_2 : X_3 \dots X_N) \\
& & \quad + I( X_1 X_2 : \overline{X}_3 \dots \overline{X}_N) \\
& \ge & \overline{\mathcal{D}}_N - I(X_1 X_2 : X_3 \dots X_N) \\
&& + I( X_1 X_2 : \overline{X}_3 \dots \overline{X}_N),
\end{eqnarray*}
where in the first line we denote the subsystems such that the conditional mutual information $I( X_1 : X_2 | X_3 \dots X_N)$ achieves minimum in $\mathcal{D}_N$. Next, the first inequality follows from Lemma~\ref{LM_MONO}, 
and the second inequality from the fact that $I(\overline{X}_1 : \overline{X}_2 | \overline{X}_3 \dots \overline{X}_N)$ may not be the one minimizing $\overline{\mathcal{D}}_N$.

\section{Increasing $\mathcal{D}$ with local operations}
\label{APP_LOUP}

We now give an analytical example where $\mathcal{D}_3$ increases under local operation on the system in the condition.
Consider the following classical state
\begin{eqnarray}
\rho &=& \frac{1}{2}|000\rangle\langle000| + \frac{1}{8}|101\rangle\langle101| \\&+& \frac{1}{8}|110\rangle\langle110| + \frac{1}{4}|111\rangle\langle111|.\nonumber
\end{eqnarray}
One verifies that its $3$-dependence equals $\mathcal{D}_3(\rho) = I(X_2:X_3|X_1)=0.06$, i.e., conditioning on $X_1$ gives the smallest conditional mutual information.
The application of an amplitude-damping channel with Kraus operators
\begin{eqnarray} 
K_0 = 
\left( {\begin{array}{cc}
	0 & 1/\sqrt{2} \\
	0 & 0 \\
	\end{array} } \right), \;\;\;
K_1 = 
\left( {\begin{array}{cc}
	1 & 0 \\
	0 & 1/\sqrt{2} \\
	\end{array} } \right),
\end{eqnarray} 
on subsystem $X_1$ produces the state $\overline{\rho}$, for which one computes $\mathcal{D}_3(\overline{\rho}) = I(\overline{X_1}:X_2|X_3) = I(\overline{X_1}:X_3|X_2)=0.19$. 
Note the change in the conditioned system minimizing the dependence.
The local operation on $X_1$ has increased the information $I(X_2:X_3|\overline{X_1})$ above the other two conditional mutual informations.

\section{Quantum qudit states maximizing $\mathcal{D}_N$}
\label{APP_MAX_STATE}

Let us consider a quantum state of $N$ qu$d$its, for $N$ being a multiple of $d$ and $N \geq 3$, defined as the common eigenstate of the generators
\begin{equation}
G_1^{(d)} = \bigotimes_{i=1}^N X^{(d)}, ~~~ G_2^{(d)}= \bigotimes_{i=1}^N Z^{(d)}, 
\end{equation}
composed of $d$-dimensional Weyl-Heisenberg matrices $X^{(d)}= \sum_{j=0}^{d-1} |j \rangle \langle j+1|$, and $Z^{(d)} = \sum_{j = 0}^{d-1} \omega^j |j \rangle \langle j|$, with $\omega = e^{i 2 \pi/d}$. 
The explicit form of the state can be calculated in the following way:
\begin{equation}
\rho^{(d)}_N= \frac{1}{d^N} \sum_{i,j=0}^{d-1} (G_1^{(d)})^i (G_2^{(d)})^j.
\label{gen}
\end{equation}	
The state (\ref{gen}) belongs to the class of $k$-uniform mixed states defined in \cite{kuniform}, with $k=N-1$.

It is known that for $N$ even the state $\rho^{(d)}_N$ has $d^{N-2}$ eigenvalues equal to $\frac{1}{d^{N-2}}$, so the entropy $S(\rho^{(d)}_N)$ is equal to 
\begin{equation}
S(\rho^{(d)}_N) = N-2.
%-d^{N-2} \frac{1}{d^{N-2}} \log_d \frac{1}{d^{N-2}} = N-2. 
\end{equation}
Since the state is $(N-1)$-uniform, all reduced density matrices are proportional to identity matrices giving
\begin{eqnarray}
S(\textrm{Tr}_i \rho^{(d)}_N) = N-1, \\
S(\textrm{Tr}_{i,j} \rho^{(d)}_N) = N-2.
\end{eqnarray}
Therefore, for $N$ even
\begin{eqnarray}
\mathcal{D}_N(\rho^{(d)}_N) = S(\textrm{Tr}_i \rho^{(d)}_N) + S(\textrm{Tr}_j \rho^{(d)}_N) \\
- S(\textrm{Tr}_{i,j} \rho^{(d)}_N) - S(\rho^{(d)}_N)  = 2. \nonumber
\end{eqnarray}
In the case of $N$ odd, however, the state $\rho^{(d)}_N$ has $d^{N-1}$ eigenvalues equal to $\frac{1}{d^{N-1}}$, and by analogous calculations we get
\begin{eqnarray}
\mathcal{D}_N(\rho^{(d)}_N) = 1,
\end{eqnarray}
for $(N-1)$-uniform states.

Now we show that the $(N-1)$-uniform states are the only ones that can achieve $\mathcal{D}_N=2$. The requirement is
\begin{eqnarray}
\mathcal{D}_N &=& I(X_1:X_2|X_3...X_N) \nonumber \\
&=&  I(X_1:X_2X_3...X_N) - I(X_1:X_3...X_N) \nonumber \\
&=& 2, \label{APP_EQ_DN2}
\end{eqnarray}
where $X_i$ stands for individual subsystem. 
Since in the definition of $\mathcal{D}_N$ we minimize over all permutations, the same equation holds for all permutations of subsystems.
Due to subadditivity, the only way to satisfy (\ref{APP_EQ_DN2}) is
\begin{eqnarray}
I(X_1:X_3...X_N) &=& 0, \\
I(X_1:X_2X_3...X_N) &=& 2. \label{APP_EQ_MAX_I}
\end{eqnarray}
From the first equation we conclude that
\begin{equation}
\rho_{13...N} = \rho_1 \otimes \rho_{3...N},
\end{equation}
which also holds for all permutation of indices. 
After tracing out all but the 1st and 3rd subsystem, we arrive at
\begin{equation}%\label{produ}
\rho_{13} = \rho_1 \otimes \rho_{3},
\end{equation}
which means that every pair of subsystems is described by a tensor product state. It follows that any $N-1$ particle subsystem is described by a simple tensor product, e.g.,
\begin{equation}\label{produ}
\rho_{13...N} = \rho_1 \otimes \rho_{3} \otimes \dots \otimes \rho_N.
\end{equation}
Using (\ref{APP_EQ_MAX_I}) we write
\begin{eqnarray}
S(X_1) - S(X_1|X_2X_3...X_N) = 2. 
\end{eqnarray}
Since for the quantum conditional entropy we have
\begin{eqnarray}
- S(X_1|X_2X_3...X_N) \leq S(X_1),
\end{eqnarray}
the bound is achieved if
\begin{eqnarray}
2 &=& S(X_1) - S(X_1|X_2X_3...X_N) \nonumber \\
&\leq& S(X_1)+S(X_1), \nonumber
\end{eqnarray}
i.e., for $S(X_1)=1$.
Hence, taking into account \eqref{produ}, all $N-1$ particle subsystems are maximally mixed, i.e., the total state is $(N-1)$-uniform.

\section{Dependence of Dicke states}

\label{formDicke}

We now present an analytical formula for $\mathcal{D}_N^e$ in $N$-qubit Dicke states with $e$ excitations.
For that state it is given by
\begin{eqnarray}
&\mathcal{D}_N(D_N^e) = {\binom{N}{e}}^{-1}  \Big[
-\frac{2 (N-1)! \log \left(\frac{e}{N}\right)}{(e-1)! (N-e)!}& \nonumber \\
&-2 \binom{N-1}{e} \log \left(1-\frac{e}{N}\right)+\binom{N-2}{e-2} \log \left(\frac{\binom{N-2}{e-2}}{\binom{N}{e}}\right)&\\
&+2 \binom{N-2}{e-1} \log \left(\frac{2 \binom{N-2}{e-1}}{\binom{N}{e}}\right)+\binom{N-2}{e} \log \left(\frac{\binom{N-2}{e}}{\binom{N}{e}}\right)\Big].& \nonumber
\end{eqnarray}
This comes from the fact that for a general Dicke state with $e$ excitations all one-partite reduced density matrices $\{\rho_i\}$ have the two non-zero eigenvalues $e/N$ and $(N-e)/N$, while all two-partite reduced states $\{\rho_{ij}\}$ have the three non-vanishing eigenvalues $e(e-1)/N(N-1)$, $2e(N-e)/N(N-1)$, and $(N-e-1)(N-e)/N(N-1)$. 
For $e$ as a function of the number of parties, $e = N/k$, in the limit of $N \to \infty$, the $N$-dependence converges to a finite value, i.e., $\mathcal{D}_N(D_N^e)$ tends to $2 (k-1)/k^2$.
The maximally achievable dependence of $1/2$ is reached for $e=N/2$.
For an arbitrarily chosen constant $e$ (e.g., for the W state, $e=1$), $\mathcal{D}_N(D_N^e)$ tends to 0 for $N \to \infty$.

These results allow to answer the following question:
If $\mathcal{D}_N \le 1$, are there local measurements on the subsystems with classical outcomes having conditional mutual information equal to $\mathcal{D}_N$?
The answer is negative.
We have optimized the conditional informations over local measurements for Dicke states with $N = 3,4$ and $0<e<N$,
and observed that the values obtained are always smaller than $\mathcal{D}_N$.

\section{Bounds on mutual $N$-dependence}
\label{ogr2}

\subsubsection{Bound on mixed states}

The subadditivity of quantum entropy states that for the reduced quantum states we have
\begin{eqnarray}
&S(\textrm{Tr}_j \rho) \leq S(\textrm{Tr}_{ij} \rho) + S(\rho_i),& \\
&S(\textrm{Tr}_i \rho) -  S(\rho_i) \leq S(\rho),&
\end{eqnarray}
where $\rho_i$ is the reduced state of the $i$-th particle.
Using the above inequalities we write
\begin{eqnarray}
\mathcal{D}_N(\rho) &\leq&  S(\textrm{Tr}_i \rho)  - S(\rho)   + S(\textrm{Tr}_j \rho)-S(\textrm{Tr}_{ij} \rho)  \nonumber \\
&\leq&  S(\rho_i)+S(\rho_i) \nonumber \\
&\leq&  2.
\end{eqnarray}

\subsubsection{Bounds on pure states}

Now we prove that for pure states we have $\mathcal{D}_N(\rho) \leq 1$. Note that due to Eq.~(\ref{pure}) from the main text we need to find the smallest mutual information $I(\rho_i:\rho_j)$, where $\rho_i$, $\rho_j$ are subsystems of the pure state $\rho$.
Consider 
\begin{eqnarray}
\lefteqn{I(\rho_i:\rho_j) + I(\rho_j:\rho_k)}\\
&& = S(\rho_i) + S(\rho_j) - S(\rho_{ij}) +  S(\rho_j) + S(\rho_k) - S(\rho_{jk}) \nonumber \\
&& \leq 2 S(\rho_j) \nonumber \\
&& \leq 2,
\end{eqnarray}
where the first inequality comes from the strong subadditivity of entropy
\begin{eqnarray}
S(\rho_i) + S(\rho_k) \leq S(\rho_{ij}) + S(\rho_{jk}).
\end{eqnarray}
Hence, this monogamy relation with respect to mutual information proves that there is always a bipartite subsystem with mutual information bounded by $1$.

\section{Experimental state generation and evaluation}
\label{APP_EXP}

The evaluated experimental states have been prepared using three different photonic setups which are detailed in Refs.~\cite{Kiesel2007,Krischek2010,Knips2016}.

The four-photon singlet state $\ket{\psi_4}$ was generated via a non-collinear type-II spontaneous parametric down conversion (SPDC) source. A pulsed UV laser with a central wavelength of $390~\text{nm}$ and an average power of about $600~\text{mW}$ from a frequency-doubled mode-locked Ti:sapphire laser was used to pump a $2~\text{mm}$-thick BBO ($\beta$-Barium Borate) crystal. For more details on this setup, see~\cite{Kiesel2007}. As a matter of fact, bosonic bunching also occurs for the emission of multi-photon states. Thereby entanglement beween four photons is obtained as in the state emitted by SPDC $\ket{\Phi^{\pm}}$-terms have a larger amplitude compared to $\ket{\Psi^{\pm}}$-terms.

For obtaining the Dicke and the no-correlation states, the SPDC crystal ($1~\text{mm}$-thick BBO, type II) was placed inside a femto-second UV-enhancement resonator. After the colinear creation of an equal number of photons by type-II SPDC, they have been distributed to 4 or 6 analyser stations. Conditioning on detecting a photon in every station, the symmetric Dicke states $D_4^2$ and $D_6^3$ can be observed. There, we obtained an average UV power of up to $8.2~\text{W}$ at a repetition rate of $81~\text{MHz}$ (see ~\cite{Krischek2010} for more details on the setup).  The three-photon state $\rho_{\textrm{nc},3}$ was obtained by tracing out one particle from $D_4^{2}$, whereas projection of one photon's polarization onto the horizontal direction and tracing this photon out provides us the Dicke state $D_3^{1}$. The tomographic data of $D_4^{2}$ was originally taken for Ref.~\cite{Schwemmer2015}. 
The five photon state $\rho_{\textrm{nc},5}$ was deduced from $D_6^{3}$.

The four-qubit $\mathrm{GHZ}_4$ state as well as the linear cluster state $L_4$ were prepared in a two-photon setup, where the four qubits were encoded in two degrees of freedom per photon, the polarization and the path. A type-I SPDC source (two crossed type-I BBO crystals pumped with a $402~\text{nm}$ continuous-wave laser at $60~\text{mW}$) generated polarization-entangled photon pairs. Using polarizing beam splitters the two polarization qubits were coupled to the path degree of freedom inside two displaced Sagnac interferometers. Right after a polarising beamsplitter, the four qubit $\mathrm{GHZ}_4$ state is created, and the combined manipulation of polarisation and path qubits in the Sganac interferometer  enables the transformation of this state to other multi-qubit entangled states~\cite{Knips2016}.

Genarally, the evaluation of the dependence is based on tomographic data. As the dependence is calculated from the eigenvalues of the reconstructed state and its marginals, a careful treatment of the eigenvalues of the reconstructed states is crucial. However, as the linearly reconstructed states feature negative eigenvalues~\cite{Schwemmer_2015}, we instead resort to a simple tomographic reconstruction based on the findings in~\cite{Krischek2010}. We model the state by $\hat\varrho=\sum_i \lambda_i |\psi_i \rangle \langle \psi_i| + \lambda_{\perp} \openone_{\perp}$, where $\lambda_i$ ($\ket{\psi_i}$) are the unaltered eigenvalues (eigenstates) from the direct state estimate which are well above a noise threshold, $\lambda_{\perp}$ is the sum of the noise eigenvalues where $\openone_{\perp}$ denotes the identity matrix in the space spanned by the noise eigenstates.

\end{document}